\newcommand{\makeset}[2]{\ensuremath{ \{ #1 \: | \: #2 \} }}
\renewcommand{\emptyset}{\varnothing}
\renewcommand{\epsilon}{\varepsilon}
\newcommand{\D}[1]{\texttt{\textup{#1}}} 
\newcommand{\basek}[2][k]{\bm{(}#2\bm{)}_{#1}} 
\newcommand{\plus}{\mathop{\boxplus^R}} 
\begin{document}

\title{Nondeterministic State Complexity \\ of Positional Addition}
\def\titlerunning{Nondeterministic State Complexity of Positional Addition}
\def\authorrunning{G.~Jir\'askov\'a, A.~Okhotin}

\author{Galina Jir\'askov\'a\,\thanks{Research supported by VEGA grant 2/0111/09.}
\institute{Mathematical Institute -- Slovak Academy of Sciences\\
	Ko\v{s}ice -- Slovakia}
\email{jiraskov@saske.sk}
\and
Alexander Okhotin\,\thanks{Research supported by the Academy of Finland under grant 118540.}
\institute{Academy of Finland}
\institute{Department of Mathematics -- University of Turku\\
        Turku -- Finland}
\email{alexander.okhotin@utu.fi}
}

\maketitle

\begin{abstract}
Consider nondeterministic finite automata
recognizing base-$k$ positional notation of numbers.
Assume that numbers are read
starting from their least significant digits.
It is proved that if two sets of numbers $S$ and $T$
are represented by 
nondeterministic automata of $m$ and $n$ states, respectively,
then their sum $\{s+t \mid s \in S, \: t \in T\}$
is represented by a nondeterministic  automaton with $2mn+2m+2n+1$ states.
Moreover, this number of states is necessary in the worst case
for all~$k \geqslant 9$. 
\end{abstract}

\section{Introduction}


Descriptional complexity of operations on regular languages
with respect to their representation by finite automata and regular expressions
is among the common topics of automata theory.
With respect to deterministic finite automata (DFAs),
and using the number of states as a complexity measure,
the \emph{state complexity} of 
basic operations on languages
was determined by Maslov~\cite{Maslov} in 1970.
In particular, such results as
``if languages $K$ and $L$ are recognized by 
DFAs of $m$ and $n$ states, respectively,
then the language $KL$ requires a DFA with up to $(2m-1)2^{n-1}$ states''
originate from that paper.

Over the last two decades,
similar results were obtained for nondeterministic finite automata (NFAs).
In particular, Birget \cite{Birget1993} has shown that
the complement of a language recognized by an $n$-state NFA
may require an NFA with as many as $2^n$ states,
and this result was later improved by Jir\'askov\'a~\cite{ji05}
who reduced the alphabet for the witness language
from $\{a,b,c,d\}$ to $\{a,b\}$.
The systematic study of
\emph{nondeterministic state complexity},
that is, state complexity with respect to NFAs,
of different operations
was started by Holzer and Kutrib~\cite{HolzerKutrib},
who obtained, in particular,
the precise results for union, intersection and concatenation.	
More recently Jir\'askov\'a and Okhotin~\cite{cyclic_shift}
determined the nondeterministic state complexity of cyclic shift,
Gruber and Holzer~\cite{GruberHolzerKutribMCU}
established precise results for scattered substrings
and scattered superstrings,
Domaratzki and Okhotin~\cite{power_sc}
studied $k$-th power of a language, $L^k$,
while Han, K. Salomaa and Wood~\cite{hsw09}
considered the standard operations on NFAs
in the context of prefix-free languages.

The present paper continues this study
by investigating another operation,
which has recently been used
by Je\.z and Okhotin~\cite{JezOkhotin_CSR,JezOkhotin_completeness}
in the study of language equations.
This is the operation of addition of strings in base-$k$ positional notation.
Let $\Sigma_k=\{\D0, \D1, \ldots, k-1\}$ with $k \geqslant 2$
be an alphabet of digits.
Then a string $a_{\ell-1} \cdots a_0 \in \Sigma_k^*$
represents a number
$\basek{a_{\ell-1} \cdots a_0}=\sum_{i=0}^{\ell-1} a_i \cdot k^i$,
and there is a correspondence between natural numbers
and strings in $\Sigma_k^* \setminus \D0\Sigma_k^*$.
For two strings $u, v \in \Sigma_k^* \setminus \D0\Sigma_k^*$,
their sum can be defined as $w=u \boxplus v$ 
as the unique string $w \in \Sigma_k^* \setminus \D0\Sigma_k^*$,
for which $\basek{w}=\basek{u}+\basek{v}$.
The operation extends to languages as follows:
for all $K, L \subseteq \Sigma_k^* \setminus \D0\Sigma_k^*$,
$K \boxplus L
		=
	\makeset{u \boxplus v}{u \in K, \: v \in L}$.


This operation preserves regularity,	
and proving that
can be regarded as an exercise in automata theory. 
The paper begins with a solution to this exercise,
given in Section~\ref{section_upper_bound}.
For convenience, it is assumed
that automata read a notation of a number
starting from its least significant digit;
to put it formally,
a slightly different operation is studied:
	$K \plus L 
		=
	(K^R \boxplus L^R)^R$.
This variant seems to be
more natural in the context of automata,
and furthermore,
since the nondeterministic state complexity of reversal is $n+1$,
the complexity of these two operations is almost the same.

The straightforward construction of an automaton
recognizing the language $L(A) \plus L(B)$
for an $m$-state NFA $A$
and an $n$-state NFA $B$
yields an NFA with $2mn+2m+2n+1$ states.
The purpose of this paper is to show
that this construction is in fact optimal,
and there are witness languages,
for which exactly this number of states is required.
This is established in Section~\ref{section_lower_bounds},
where worst-case automata are presented
for $m, n \geqslant 1$ with $m+n \geqslant 3$.
The case of $m=n=1$ requires a special treatment,
and it is proved that the NFA recognizing a positional sum
of two one-state automata requires 6 states in the worst case.

\section[Constructing an NFA]{Constructing an NFA for $K \plus L$} \label{section_upper_bound}

A \emph{nondeterministic finite automaton} (NFA)
is a quintuple $A=(Q, \Sigma, \delta, q_0, F)$,
in which $Q$ is a finite set of states,
$\Sigma$ is a finite input alphabet,
$\delta : Q \times \Sigma \to 2^Q$
is the (nondeterministic) transition function,
$q_0 \in Q$ is the initial state, and
$F \subseteq Q$ is the set of accepting states.
An NFA is called a \emph{deterministic finite automaton} (DFA)
if $|\delta(q, a)|=1$ for all $q$ and $a$,
and it is a \emph{partial DFA}
if $|\delta(q, a)| \leqslant 1$.
The transition function can be naturally extended to the domain $Q \times \Sigma^*$.
The language recognized by the NFA $A$,
denoted  $L(A)$,
 is the set $\{w \in \Sigma^* \mid \delta(q_0,w) \cap F \neq \emptyset \}$.

Throughout this paper, the letters in an alphabet of size $k$
are always considered as digits in base-$k$ notation,
and the alphabet is 
$\Sigma_k=\{\D0, \D1, \ldots, k-1\}$.
With such an alphabet fixed,
the \emph{nondeterministic state complexity 
of positional addition} of NFAs is defined as a function
$f_k \colon \mathbb{N} \times \mathbb{N} \to \mathbb{N}$,
where $f_k(m, n)$ is the least number of states in an NFA
sufficient to represent $L(A) \plus L(B)$
for every $m$-state NFA $A$ and $n$-state NFA $B$
with $L(A), L(B) \subseteq \Sigma_k^* \setminus \D0 \Sigma_k^*$.
The following lemma,
besides formally establishing
that regular languages are closed
under addition in positional notation,
gives an upper bound on this function.

\begin{lemma}\label{le:upper_bound}
Let $A$ and $B$ be NFAs
over 
$\Sigma_k=\{\D0, \D1, \ldots, k-1\}$
with $m$ and $n$ states, respectively.
Let $L(A) \cap \D0 \Sigma_k^* = L(B) \cap \D0 \Sigma_k^* = \emptyset$.
Then there exists 
a $(2mn+2m+2n+1)$-state NFA over $\Sigma_k$ for
the language
$L(A) \plus L(B)$.
\end{lemma}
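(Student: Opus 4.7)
The plan is to build an NFA $C$ which, reading a candidate output string $w$ one digit at a time (least-significant first), nondeterministically guesses operand digits $a$ for $A$ and $b$ for $B$ that, together with an incoming one-bit carry $\gamma$, satisfy $a + b + \gamma = c + k\gamma'$ for the current output digit $c$ and outgoing carry $\gamma'$. If $C$ is in a state recording the current states $p, q$ of $A, B$ together with $\gamma$, then on input $c$ it moves to $(p', q', \gamma')$ for every such $a, b$ and every $p' \in \delta_A(p, a)$, $q' \in \delta_B(q, b)$. Because the lengths of $u$, $v$, and $w$ need not coincide, $C$ must also be able to operate in a single-operand phase after one of $u, v$ has been fully consumed.

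Accordingly, the state set is partitioned into four groups: both-active states $(p, q, \gamma) \in Q_A \times Q_B \times \{0,1\}$, giving $2mn$ states; only-$A$ states $(p, \gamma) \in Q_A \times \{0,1\}$, giving $2m$ states; only-$B$ states $(q, \gamma) \in Q_B \times \{0,1\}$, giving $2n$ states; and one additional state $f$, for a total of $2mn + 2m + 2n + 1$. The jump from a both-active state $(p, q, \gamma)$ into an only-$A$ state $(p', \gamma')$ on input $c$ is permitted precisely when $q \in F_B$ (so that $v$ ends at the previous position and the remaining digits of $w$ are the sum of the continuation of $u$ and the carry); the symmetric rule handles $p \in F_A$. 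The accepting set collects every state in which both operands have been fully consumed and no carry remains, together with $f$: all $(p, q, 0)$ with $p \in F_A$ and $q \in F_B$, all $(p, 0)$ with $p \in F_A$, all $(q, 0)$ with $q \in F_B$, and $f$ itself. The extra state $f$ absorbs the carry-out case $|w| = \max(|u|,|v|)+1$ via transitions $(p, q, 1) \xrightarrow{\D1} f$ with $p \in F_A, q \in F_B$, and analogously from $(p, 1)$ and $(q, 1)$; $f$ has no outgoing transitions, so the carry-out digit can occur at most once, at the very end.

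Correctness then follows by a routine semantic argument in both directions: any accepting run of $C$ on $w$ can be decoded into a pair $u \in L(A)$, $v \in L(B)$ whose base-$k$ numerical sum is $w$, and conversely every such pair drives an accepting run. The main subtlety in the construction is to ensure that the mode-switching and the handling of the terminal carry are folded directly into the transition function and into the choice of accepting states, rather than being introduced as $\epsilon$-transitions whose elimination would inflate the count; with this bookkeeping the total remains exactly $2mn + 2m + 2n + 1$, matching the statement.
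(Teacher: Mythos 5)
Your construction coincides with the paper's own: the same four groups of states (pairs with carry, two single-operand groups with carry, and one dead accepting state for the terminal carry-out), the same guessed-digit transitions satisfying $a+b+\gamma = c + k\gamma'$, the same mode-switch condition ($q \in F_B$, resp.\ $p \in F_A$), and the same accepting set. The proposal is correct and essentially identical in approach to the paper's proof.
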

\begin{proof}
Let $A=(P, \Sigma_k, \delta_A, p_0, F_A)$
and $B=(Q, \Sigma_k, \delta_B, q_0, F_B)$.
The new NFA $C$ has a set of states split into four groups: 
 $\widehat{Q} = Q^{AB} \cup Q^A \cup Q^B \cup \{q_{acc}\}$, where
\begin{align*}
	Q^{AB} &= P \times Q \times \{0, 1\}, \\
	Q^A &= \{A\} \times P \times \{0, 1\}, \\
	Q^B &= \{B\} \times Q \times \{0, 1\}.
\end{align*}

(I) Each state $(p, q, c) \in Q^{AB}$
corresponds to $A$ in state $p$, $B$ in state $q$ and carry digit $c \in \{0, 1\}$.
In particular, the state $(p_0, q_0, 0)$ is the initial state of this NFA.
State $(p, q, c)$ represents the case
shown in Figure~\ref{f:nfa_for_plus_addition}(left).
A string of digits $ddddd$ has been read,
and $C$ has guessed its representation as a sum of two strings of digits,
$aaaaa \plus bbbbb$,
where $A$ goes to $p$ by $aaaaa$
and $B$ goes to $q$ by $bbbbb$.
If $c=1$, then $aaaaa \plus bbbbb = \D1ddddd$.

The transitions from one state of this kind to another
are defined as follows.
Suppose $A$ reads a digit $a$ 
and goes from $p$ to $p'$,
while $B$ may go from $q$ to $q'$ by a digit $b$. 
Then, taking the carry digit~$c$ into account,
the sum may contain a digit $a+b+c$ or $a+b+c-k$ in this position
depending on whether $a+b+c < k$ or not,
and also the carry should be adjusted accordingly.
Thus $C$ has a transition
from $(p, q, c)$ to
to $(p',q',0)$ by $a+b+c$ if $a+b+c < k$,
or a transition to $(p',q',1)$ by $a+b+c-k$ if $a+b+c \geqslant k$.
This procedure continues
until the string of digits recognized by $A$ or by $B$ finishes.
Then $C$ enters a state of one of the following two groups.
\begin{figure}[hbt]
	\centerline{\includegraphics[scale=0.83]{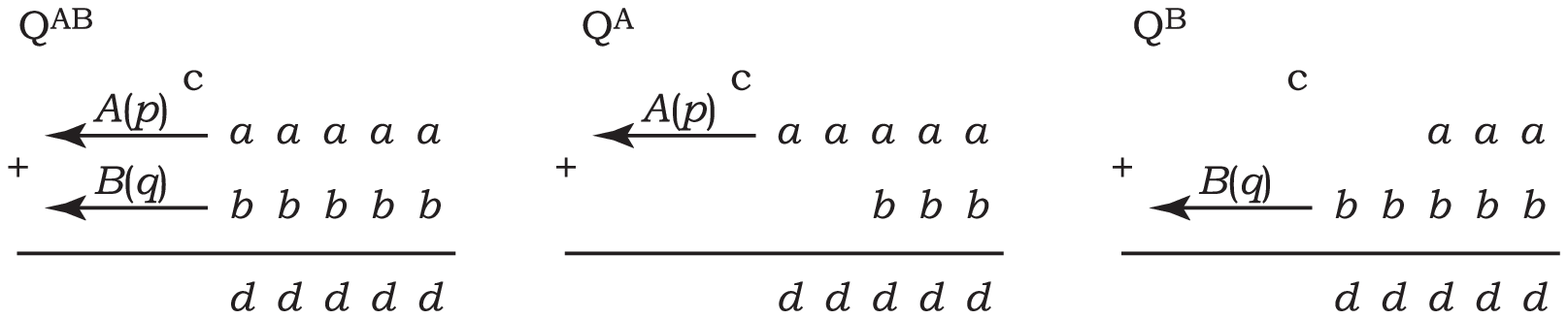}}
	\caption{Transitions out of $(p,q,0) \in Q^{AB}$ in the constructed NFA.}
	\label{f:nfa_for_plus_addition}
\end{figure}

(II) If the automaton $B$ is no longer running
(that is, the notation of the second number has ended),
while $A$ still produces some digits,
this case is implemented in states $(A, p, c) \in Q^A$,
where $p$ is a state of~$A$ and~$c$ is a carry.
This case is illustrated in Figure~\ref{f:nfa_for_plus_addition}(middle).
The NFA $C$ reaches this group of states as follows.
For every state $(p, q, c) \in Q^{AB}$,
such that $q$ 
is an accepting state of $B$,
the string recognized by $B$ can be pronounced finished.
Suppose that $A$ may go from $p$ to $p'$
by a digit $a$. 
Then the sum may contain a digit $a+c$ or $a+c-k$.
This case is represented by a transition of $C$
from $(p, q, c)$
 to $(A,p',0)$ by $a+c$ if $a+c<k$,
 or to $(A,p',1)$ by $a+c-k$ if $a+c\geqslant k$.
Once $C$ enters the subset $Q^A$,
it can continue reading the number as follows.
For every state $(A, p, c)$,
if $A$ may go from $p$ to $p'$ by a digit $a$,
then there is a transition from $(A, p, c)$
to $(A,p',0)$ by $(a+c)$ if $a+c< k$,
or to $(A,p',1)$ by $(a+c-k)$ if $a+c \geqslant k$.

(III) Symmetrically, there is a group of states $(B, q, c)$,
which correspond to the case
when the number read by $A$ has ended.
For each state $(p, q, c) \in Q^{AB}$ with $p \in F_A$,
for every digit $b$ 
and for every state~$q'$,
such that $B$ has a transition from $q$ to $q'$ by $b$,
the new automaton $C$ has a transition
from $(p, q, c)$
to $(B,q',0)$ by $b+c$ if $b+c<k$,
or to $(B,q',1)$ by $b+c-k$ if $b+c \geqslant k$.
Second, for every state $(B, q, c)$,
if $B$ may go from $q$ to $q'$ by a digit $b$,
then $C$ has a transition from $(B, q, c)$
to $(B,q',0)$ by $(b+c)$ if $b+c< k$,
or to $(B,q',1)$ by $(b+c-k)$ if $b+c \geqslant k$.

(IV) $q_{acc}$ is a special accepting state with no outgoing transitions.
This state is needed when the strings of digits
recognized by $A$ and $B$ have already finished, but the carry digit remains,
and thus an extra input symbol has to be read.
The automaton $C$ reaches this state by reading the digit $\D1$
under the following conditions:
for all $p \in F_A$ and $q \in F_B$,
there are transitions by $\D1$
from $(p, q, 1)$,
from $(A, p, 1)$
and from $(B, q, 1)$
to $q_{acc}$.

The other accepting states are all states of the form
$(p, q, 0)$, $(A, p, 0)$ and $(B, q, 0)$,
with $p \in F_A$ and~$q \in F_B$.

This completes the construction.
The general form of transitions
from a state $(p, q, c) \in Q^{AB}$
is illustrated in Figure~\ref{f:nfa_for_plus},
separately for $c=0$ and $c=1$.
\end{proof}
\begin{figure}[hbt]
	\centerline{\includegraphics[scale=0.83]{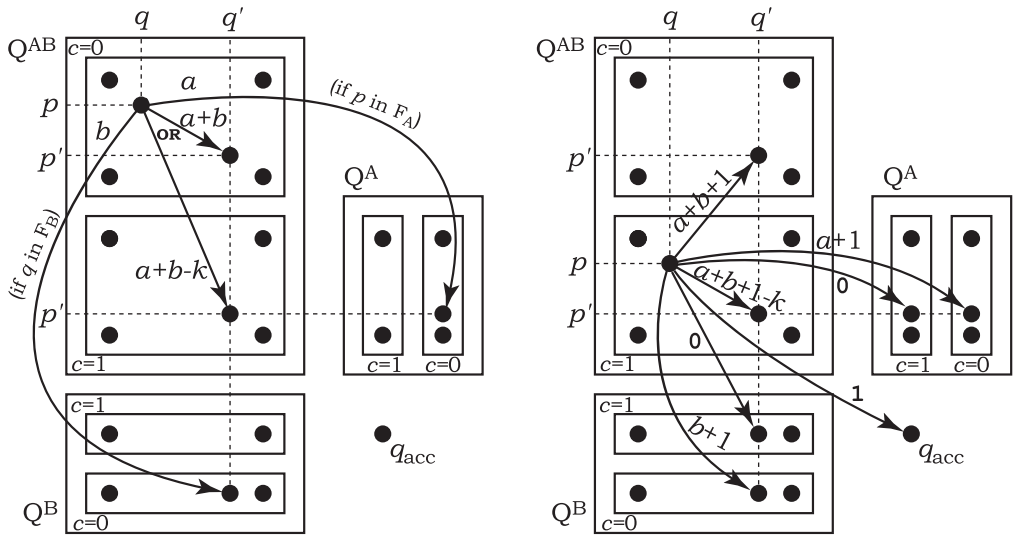}}
	\caption{Transitions out of $(p, q, 0)$
	and out of $(p, q, 1)$ in the constructed NFA.}
	\label{f:nfa_for_plus}
\end{figure}

\section{Lower bounds} \label{section_lower_bounds}

The goal of the paper is to prove
that the $2mn+2m+2n+1$ bound of Lemma~\ref{le:upper_bound} is tight.
As this requires a rather difficult  
proof,
the following weaker result will be established first.

\begin{lemma} 
Let $\Sigma_k=\{\D0, \D1, \ldots, k-1\}$
be an alphabet with $k \geqslant 2$.
Let $m,n \geqslant 1$ be relatively prime numbers
and consider languages $L_m=(\D1^m)^*$ and $L_n=(\D1^n)^*$,
which are representable by NFAs of
$m$ and $n$ states, respectively.
Then every NFA recognizing the language $L_m \plus L_n$
has at least $mn$ states.
\end{lemma}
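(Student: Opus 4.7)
My plan is to apply the standard fooling-set lower bound for NFAs. I will exhibit $mn$ pairs $\{(x_a, y_a) : a = 0, 1, \ldots, mn - 1\}$ with $x_a y_a \in L_m \plus L_n$ for every $a$ and $x_a y_{a'} \notin L_m \plus L_n$ whenever $a \neq a'$. Such a set forces any NFA for $L_m \plus L_n$ to have at least $mn$ states, since if the state reached after reading $x_a$ in an accepting computation on $x_a y_a$ coincided with the state reached after $x_{a'}$ in an accepting computation on $x_{a'} y_{a'}$, splicing the two would accept $x_a y_{a'}$, contradicting the fooling property.

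The first step is to describe $L_m \plus L_n$ explicitly. Evaluating $\basek{\D1^{am}} + \basek{\D1^{bn}}$ column by column, the result carries digit $\D2$ in the lowest $\min(am, bn)$ positions and digit $\D1$ in the next $|am - bn|$ positions. Because $\plus$ reads least significant digits first, the nonempty words of $L_m \plus L_n$ are exactly the strings $\D2^q \D1^p$ with $p, q \geqslant 0$, $p + q \geqslant 1$, satisfying either $n \mid q$ together with $m \mid p + q$, or $m \mid q$ together with $n \mid p + q$.

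With this characterization in hand, define
\[
    x_a \;=\; \D2^{mn + a}, \qquad y_a \;=\; \D2^{(mn - a) \bmod mn}\,\D1^{mn} \qquad (a = 0, 1, \ldots, mn - 1).
\]
For each $a$, the concatenation $x_a y_a$ has its $\D2$-count equal to $mn$ (when $a = 0$) or $2mn$ (when $a \geqslant 1$) and its $\D1$-count equal to $mn$; both counts are multiples of $mn$, hence of both $m$ and $n$, so $x_a y_a \in L_m \plus L_n$. For $a \neq a'$, the $\D2$-count of $x_a y_{a'}$ is congruent to $a - a' \pmod{mn}$, which is nonzero since $a, a' \in \{0, \ldots, mn - 1\}$, while the $\D1$-count is still $mn$, a common multiple of $m$ and $n$. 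Both alternatives in the characterization therefore reduce to the requirement $mn \mid q$, using the coprimality of $m$ and $n$ to combine $m \mid q$ with $n \mid q$; this fails, so $x_a y_{a'} \notin L_m \plus L_n$, completing the fooling argument.

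The principal obstacle is establishing the explicit form of $L_m \plus L_n$ in the second paragraph: one must verify that column-wise addition truly produces a word of shape $\D2^* \D1^*$ (in reversed notation) and handle the edge cases $a = 0$, $b = 0$, and $am = bn$, which give pure $\D1$-strings, pure $\D2$-strings, and the empty string respectively. Once the characterization is settled, the remainder is a short modular-arithmetic computation, and the argument applies uniformly to every base $k \geqslant 2$.
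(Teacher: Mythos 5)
Your argument for $k \geqslant 3$ is correct, and it takes a genuinely different route from the paper. The paper reduces to a known unary fact: it deletes from a hypothetical $\ell$-state NFA all transitions except those by $\D2$, observes that the resulting automaton accepts $(L_m \plus L_n) \cap \D2^* = (\D2^{mn})^*$, and invokes the $mn$-state lower bound for that cyclic unary language. You instead characterize the whole language $L_m \plus L_n$ as $\{\D2^q\D1^p\}$ with the stated divisibility conditions and build an explicit fooling set of size $mn$ directly. Your characterization and the modular computation both check out: with $p = mn$ the two disjuncts collapse to $mn \mid q$ by coprimality, and the $\D2$-count of $x_a y_{a'}$ is $a - a' \not\equiv 0 \pmod{mn}$. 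This is a self-contained version of essentially the same underlying counting phenomenon.

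However, there is a genuine gap at $k = 2$, which the lemma explicitly covers ($k \geqslant 2$) and which you claim, incorrectly, your argument handles uniformly. For $k = 2$ the alphabet is $\{\D0, \D1\}$: there is no digit $\D2$, so your witness strings $\D2^{mn+a}$ are not even words over $\Sigma_2$, and more fundamentally the column-wise addition $1 + 1$ produces a carry, so the sum of two repunits does not have the shape $\D2^q\D1^p$ at all (for instance $\D1\D1 \plus \D1\D1 = \D0\D1\D1$, since $3 + 3 = 6$). The entire characterization in your second paragraph, and hence the fooling set built on it, collapses in base $2$. The paper handles this case separately by intersecting with $\D0\D1^*$, obtaining $\D0(\D1^{mn})^*$, which needs $mn + 1$ states while the intersection automaton has only $\ell + 1$; you would need an analogous separate argument (or a different fooling set over $\{\D0,\D1\}$) to cover $k = 2$.
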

\begin{proof}
Let $A$ be an NFA for $L_m \plus L_n$
with $\ell$ states.
If $k \geqslant 3$,
construct a new $\ell$-state NFA $B$
recognizing $(L_m \plus L_n) \cap \D2^*$
which can be done by taking the NFA $A$
and omitting transitions by all symbols except for $\D2$.
Then $L(B)=(\D2^{mn})^*$.
This is a language that requires an NFA of at least $mn$ states.
Therefore, \hbox{$\ell \geqslant mn$}. 
In the case of $k=2$,
let $B$ recognize $(L_m \plus L_n) \cap \D0\D1^*$.
In this case it is sufficient to have $\ell+1$ states in $B$,
and $L(B)=\D0 (\D1^{mn})^*$.
As this language requires
an NFA with at least $mn+1$ states,
the statement is proved.
\end{proof}

In order to prove a precise lower bound,
a different
construction of witness languages is needed.
At present, the witness languages
are defined over an alphabet of at least nine symbols,
that is,
the bound applies to addition in base 9 or greater.
Lower bounds on the resulting languages of sums
will be proved using the well-known
fooling-set lower bound technique.
After defining a fooling set
we recall the lemma 
describing the technique, 
and give a small example.
Then, the lower bound result follows.

\begin{definition}
\label{def_fool}
A set of pairs of strings $ \{ (x_i,y_i)\mid i=1,2,\ldots,n\} $ 
is said to be a~\textit{fooling set} for a  language~$L$ 
if for every $ i $ and $ j $ in $ \{ 1,2,\ldots,n\}, $
\begin{description}
\item[(F1)]
	the string $x_{i}y_{i}$ is in the language $ L,$
\item[(F2)]
	if $i\neq j,$ then at least one of the strings $x_{i}y_{j}$ 
and $x_{j}y_{i}$ is not in  $ L.$
\end{description}
\end{definition}

\begin{lemma}[{(Birget~\cite{Birget})}]
\label{fool}
Let 
$\mathcal{A}$
be a~fooling set for a regular language $L$.
Then every NFA recognizing the language $L$ requires at least $|\mathcal{A}|$ states.
\end{lemma}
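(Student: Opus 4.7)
The plan is to use a simple pigeonhole/splicing argument that assigns each fooling-set pair to a distinct state of any NFA for $L$. Suppose $N = (Q, \Sigma_k, \delta, s_0, F)$ is an arbitrary NFA with $L(N) = L$; I would prove $|Q| \geqslant |\mathcal{A}|$ by exhibiting an injection from $\mathcal{A}$ into $Q$.

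First, I would use condition (F1): since each $x_i y_i$ lies in $L$, the NFA $N$ has at least one accepting computation on $x_i y_i$. I would fix one such computation for every $i$ and let $q_i$ be the state it visits immediately after reading the prefix $x_i$; thus $q_i \in \delta(s_0, x_i)$ and $\delta(q_i, y_i) \cap F \neq \emptyset$. This yields a map $i \mapsto q_i$ from $\{1, \ldots, n\}$ into $Q$, and it only remains to show that this map is injective.

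The key step is to rule out $q_i = q_j$ for any $i \neq j$. The argument is by contradiction: if $q_i = q_j$, then one can take the first half of the accepting computation on $x_i y_i$ (reading $x_i$ to arrive at $q_i$) and splice it onto the second half of the accepting computation on $x_j y_j$ (reading $y_j$ from $q_j$ into $F$). The concatenation is a valid accepting computation of $N$ on $x_i y_j$, so $x_i y_j \in L$. Swapping the roles of $i$ and $j$ gives $x_j y_i \in L$ as well, contradicting (F2), which requires at least one of these two strings to be outside $L$. Injectivity then yields $|Q| \geqslant n = |\mathcal{A}|$.

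I do not expect any real obstacle in this proof. The only point that needs a bit of care is committing to one fixed accepting computation per pair before extracting the intermediate state $q_i$, so that the two halves glued together in the splicing step are unambiguously defined. Otherwise the argument is straightforward and uses nothing about the particular structure of $N$, the alphabet, or the language $L$, so it delivers the bound in full generality.
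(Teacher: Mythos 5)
Your argument is correct and is the standard proof of the fooling-set lemma: fixing one accepting computation per pair, extracting the state reached after the prefix $x_i$, and using the splicing argument to show that a collision $q_i=q_j$ would force both $x_iy_j$ and $x_jy_i$ into $L$, contradicting (F2). The paper itself gives no proof of this lemma --- it is stated as a known result and attributed to Birget --- so there is nothing to diverge from; your write-up, including the care taken to fix a single computation before splitting it, is exactly the canonical argument.
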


\begin{example}
\label{ex} 
Consider the regular language  
$ L = \{  w \in \Sigma^* \mid$
	the number of $a$'s in $w$	is a multiple of $n \}$.
The set of pairs of strings
$ \{  (a,a^{n-1}),(a^2,a^{n-2}), \ldots, (a^n,\varepsilon)  \} $ 
is a fooling set for the language $ L $ because for every $ i $ and $ j $ in $ \{1,2,\ldots,n\} ,$
\begin{description}
\item[(F1)]
	$ a^ia^{n-i}=a^n$, and the string $a^n$ is in the language $L$, and 
\item[(F2)]
	if $ 1 \le i < j \le n, $ then $ a^ia^{n-j}=a^{n-(j-i)}$, 
	and the string $ a^{n-(j-i)}$ is not in the language $L$	
	since $ 0< n-(j-i) < n$.
\end{description}
Hence by Lemma~\ref{fool}, every NFA for the language $L$ needs at least $n$ states.%
\hfill$\diamond$
\end{example}

\begin{lemma}\label{nfa_lower_bound_lemma}
Let $\Sigma_k=\{\D0, \D1, \ldots, k-1\}$ be an alphabet with $k \geqslant 9$.
Let $m \geqslant 1$ and $n \geqslant 2$,
and consider the partial DFAs $A_m$ and $B_n$ over $\Sigma_k$
given in Figure~\ref{f:nfa_lower_bound}.
Then 
every  NFA
for  $L(A_m) \plus L(B_n)$
has at least $2mn+2m+2n+1$ states.
\end{lemma}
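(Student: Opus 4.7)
The plan is to invoke Lemma~\ref{fool} and exhibit a fooling set $\mathcal{A}$ of cardinality $2mn+2m+2n+1$ for the language $L(A_m) \plus L(B_n)$. I would partition $\mathcal{A}$ into four sub-families mirroring the four groups of states of the NFA constructed in the proof of Lemma~\ref{le:upper_bound}: a family of $2mn$ pairs $(x_{p,q,c}, y_{p,q,c})$ indexed by a state $p$ of $A_m$, a state $q$ of $B_n$ and a carry bit $c \in \{0,1\}$, corresponding to $Q^{AB}$; a family of $2m$ pairs indexed by a state of $A_m$ and a carry bit, corresponding to $Q^A$; a family of $2n$ pairs indexed by a state of $B_n$ and a carry bit, corresponding to $Q^B$; and one ``residual-carry'' pair corresponding to $q_{acc}$.

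For a triple $(p,q,c) \in Q^{AB}$, I would pick digit strings $u$ and $v$ driving $A_m$ from its initial state to $p$ and $B_n$ from its initial state to $q$, and let $x_{p,q,c}$ be their positional sum leaving carry $c$ at the top. The companion $y_{p,q,c}$ would be the positional sum of two further digit strings $u', v'$ that take $A_m$ and $B_n$ from $p$ and $q$, respectively, to accepting states, with the carry entering the join engineered to equal $c$. The $Q^A$-, $Q^B$- and $q_{acc}$-pairs would be defined analogously, with the appropriate sub-automaton having already reached an accepting state at the split (and therefore contributing no further digits on that side), and, for $q_{acc}$, with both sides having terminated while a carry of $1$ is absorbed by a single extra digit $\D1$ in $y$. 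Condition \textbf{(F1)} then holds by construction, since each concatenation $x_i y_i$ is literally a positional sum of a word in $L(A_m)$ with a word in $L(B_n)$.

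The heart of the proof is condition \textbf{(F2)}. My main tool would be the large alphabet: with $k \geqslant 9$ there are enough digits to decorate each $x_i$ with a short ``fingerprint'' pinning down the group, the state(s) and the carry associated with $i$, in the sense that every accepting sum decomposition of $x_i y$, for any continuation $y$, must end the $x_i$-phase at the state of $C$ indexed by $i$. Swapping in a $y_j$ designed for a different index then either forces a carry mismatch whose off-by-one effect propagates visibly into $y_j$, or produces a digit that no legal pair of digits from $A_m$ and $B_n$ could sum to at that position, breaking membership in $L(A_m) \plus L(B_n)$ for at least one of $x_i y_j$ and $x_j y_i$.

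The principal obstacle I anticipate is distinguishing pairs \emph{within} the same sub-family, in particular $(p,q,0)$ versus $(p,q,1)$, and a $Q^{AB}$-pair versus a ``finished'' $Q^A$- or $Q^B$-pair sharing the same state on the non-finished side. In those cases the two target states differ only by a carry bit or by whether one sub-automaton is still running, so the $x$- and $y$-sides must be engineered so that a single-bit difference propagates to a visibly illegal digit on the crossed input. Making the construction uniform in $m$ and $n$ using only the specific partial DFAs $A_m$ and $B_n$ from Figure~\ref{f:nfa_lower_bound}, and showing that the slack $k \geqslant 9$ is enough to accommodate all needed fingerprints simultaneously, is where most of the technical effort should be concentrated.
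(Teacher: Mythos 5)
Your overall strategy is exactly the paper's: apply the fooling-set bound of Lemma~\ref{fool} to a set of $2mn+2m+2n+1$ pairs partitioned according to the four state groups $Q^{AB}$, $Q^A$, $Q^B$, $\{q_{acc}\}$ of the NFA from Lemma~\ref{le:upper_bound}, with each pair $(x,y)$ chosen so that $x$ drives the constructed automaton to the intended state and $y$ completes an accepting computation from it. The paper's fooling set $\mathcal{F}=\mathcal{A}\cup\mathcal{B}\cup\mathcal{C}\cup\mathcal{D}$ plus the extra pair $(\D4^{m-1}\D3^{n-2}(k-2)\D0\D1,\varepsilon)$ has precisely the $2mn+2m+2n+1$ structure you describe.

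However, what you have written is a plan, not a proof: the entire mathematical content of this lemma lies in the step you defer as ``where most of the technical effort should be concentrated.'' Two concrete gaps. First, you never exhibit the words. The lemma is about the \emph{specific} automata $A_m$ and $B_n$ of Figure~\ref{f:nfa_lower_bound}, which use only the digit sets $\{\D1,\D2,k-1\}$ and $\{\D1,\D3,k-1\}$; the fooling-set words must be built from sums of these digits (e.g.\ $\D4=\D1+\D3$, $\D5=\D2+\D3$, $k-2=(k-1)+(k-1)$ with carry), and the feasibility of your ``fingerprints'' within $k\geqslant 9$ is exactly what has to be demonstrated, not assumed. Second, and more subtly, condition (F2) is not merely about the intended computation path: since $M$ is nondeterministic, after reading $x_i$ the automaton may occupy a whole \emph{set} of states, one for each legal decomposition of the digit string $x_i$ as a sum, and you must argue that no stray reachable state can accept the crossed suffix $y_j$. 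The paper handles this by choosing digits whose decomposition is forced (so that, e.g., ``after reading $xv$ the NFA may only be in state $(m-1-r+i,n-1-s+j,0)$'') and by exploiting undefined transitions ($\D5$ and $\D6$ are unreadable in $Q^A\cup Q^B$, $k-2$ is unreadable in $Q^B$, nothing is readable in $q_{acc}$). Without carrying out this analysis for every cross pair --- including the within-family cases you correctly flag as hardest --- the lemma is not established.
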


\begin{proof}
In plain words,  $L(A_m)$ represents
all numbers with their base-$k$ notation
using only digits $\D1$, $\D2$ and $k-1$,
with the number of $\D1$s equal to $m-1$ modulo $m$.
Similarly, the base-$k$ notation
of all numbers in  $L(B_n)$
uses only digits $\D1$, $\D3$ and $k-1$,
and the total number of $\D1$s and $(k-1)$s
should be $n-1$ modulo~$n$.

\begin{figure}[hbt]
	\centerline{\includegraphics[scale=0.83]{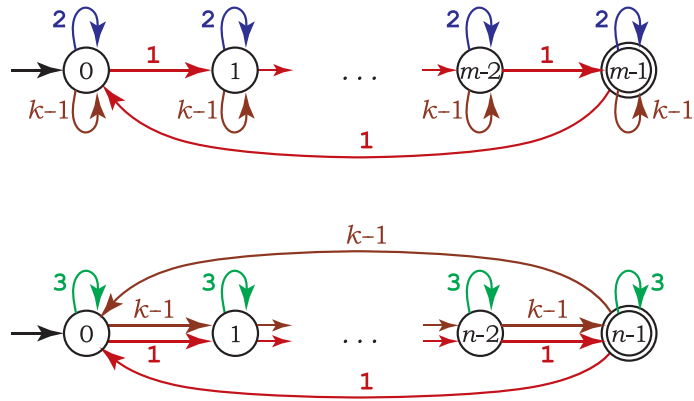}}
	\caption{The nondeterministic finite automata $A_m$ and $B_n$
		over $\Sigma_k=\{\D0, \D1, \ldots, k-1\}$ with $k \geqslant 9$.}
	\label{f:nfa_lower_bound}
\end{figure}

Let the set of states of $A_m$ be $P=\{0, \ldots, m-1\}$
and let the states of $B_n$ be $Q=\{0, \ldots, n-1\}$.
Let $L=L(A_m) \plus L(B_n)$, and let us
construct a $(2mn+2m+2n+1)$-state NFA 
$$M=(Q^{AB} \cup Q^A \cup Q^B \cup \{q_{acc}\},\Sigma_k,\delta,q_0,F)$$
for the language $L$ 
as in Lemma~\ref{le:upper_bound}.
The initial state of $M$ is $q_0=(0, 0, 0)$.
The full set of transitions
is omitted due to space constraints;
the reader can reconstruct it
according to Lemma~\ref{le:upper_bound}.
The below incomplete list
represents all information about $M$
used later in the proof:
\begin{itemize} 
 \item Each state $(i,j,0)$ goes 
	to itself by \D5;
	to  state $(i,j+1,0)$ by \D3; 
	to state $(i+1,j,0)$ by $4$,  and
	to state $(i,j+1,1)$ by $k-2$.
	Each state $(m-1,j,0)$ also goes
	 to state $(B,j,0)$ by \D3.
	 

  \item Each state $(i,j,1)$ goes 
	 to state $(i,j,0)$ by \D6. 
	%
	Each state $(i,n-1,1)$ also goes 
		to state $(A,i,1)$ by~$\D0$,  
	 %
	and each state  $(m-1,j,1)$ also goes
		to state $(B,j+1,1)$  by $\D0$. 
	 %
 \item Each state $(A,i,1)$ goes 
	to itself by $\D0$;
	to state $(A,i,0)$ by $\D3$; and
	to state $(A,i+1,0)$ by $\D2$.
	
 \item Each state $(A,i,0)$  goes 
	to itself by $\D2$ and  $k-1$; and
	to  $(A,i+1,0)$ by $\D1$.
	
 \item Each state $(B,j,1)$ goes	
	to state $(B,j+1,1)$ by $\D0$; 
	to state $(B,j,0)$ by $\D4$; and
	to state $(B,j+1,0)$ by $\D2$.
 
 \item Each state $(B,j,0)$ goes	 
	to itself by $\D3$; and
	to  $(B,j+1,0)$ by $\D1$ and  $k-1$. 
 \item State $(A,m-1,1)$ goes to state $q_{acc}$ by $\D1$.	
\end{itemize}
Notice that in states  
$(A,i,c)$ and $(B,j,c)$, 
transitions by $\D5$ and by $\D6$  
are not defined, and no transitions are defined in state $q_{acc}$. 
There are four accepting states:
$(m-n, n-1, 0)$, $(A, m-1, 0)$, $(B, n-1, 0)$ and $q_{acc}$.
Transitions from $(i,j,0)$ and $(i,j,1)$
are illustrated in Figure~\ref{f:nfa_lower_bound_M},
where transitions not used in the proof are shown in grey.

\begin{figure}[hbt]
	\centerline{\includegraphics[scale=0.83]{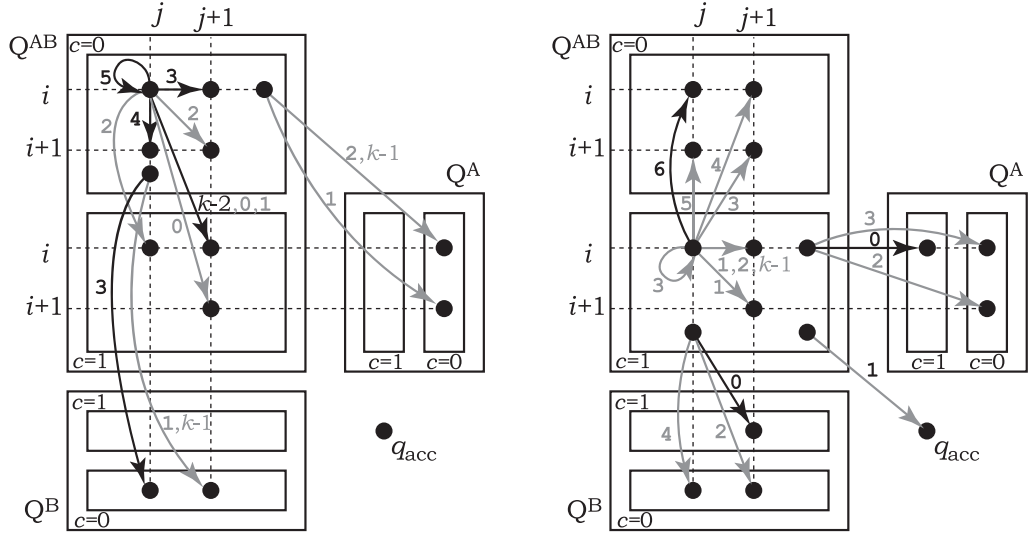}}
	\caption{NFA $M$: transitions out of states $(i,j,0)$ and $(i,j,1)$.}
	\label{f:nfa_lower_bound_M}
\end{figure}

Our goal is to show that every NFA 
for the language $L$ requires at least $2mn+2m+2n+1$ states.
We prove this by describing a fooling set 
for the language $L$ of size $2mn+2m+2n+1$.
Consider the following sets of pairs of strings,
in which the difference $j-1$ is modulo $n$
(that is, $j-1=n-1$ for $j=0$):
\begin{align*}
\mathcal{A} =\:
	&\{(\D4^i\D3^j, \D5\D4^{m-1-i}\D3^{n-1-j}\D5) 
	\mid i=0,1,\ldots,m-1, j=0,1,\ldots,n-1\}, \\
\mathcal{B} =\:
	&\{(\D4^i\D3^{j-1}(k-2), \D6\D4^{m-1-i}\D3^{n-1-j}\D5) 
	\mid i=0,1,\ldots,m-1, j=0,1,\ldots,n-1\}, \\
\mathcal{C} =\:
	&\{(\D4^i\D3^{n-2}(k-2)\D0, \D3\D1^{m-1-i}\D2\D2) 
	\mid i=0,1,\ldots,m-1\} \: \cup \\
	&\{(\D4^i\D3^{n-2}(k-2)\D0\D3,  \D1^{m-1-i}\D2\D2) 
	\mid i=0,1,\ldots,m-1 \}, \\
\mathcal{D} =\:
	&\{(\D4^{m-1}\D3^{n-1}(k-2)\D0\D0^j, \D0^{n-1-j}\D4\D1^{n-1}\D3\D3) 
	\mid j=0,1,\ldots,n-1\} \: \cup \\
	&\{(\D4^{m-1}\D3^{n-1}(k-2)\D0^n\D4\D1^j,  \D1^{n-1-j}\D3\D3) 
	\mid j=0,1,\ldots,n-1 \}.
\end{align*}
Let $\mathcal{F}=\mathcal{A} \cup \mathcal{B} \cup \mathcal{C} \cup \mathcal{D}$.
Let us show that the set $\mathcal{F}$ is a fooling set for 
$L$, that is,
\begin{description}
\item[(F1)]
	for each pair $(x,y)$ in $\mathcal{F}$, the string $xy$ is in $L$;

\item[(F2)]
	if $(x,y)$ and $(u,v)$ are two different pairs in $\mathcal{F}$,
	then $xv \notin L$ or $uy \notin L$.
\end{description}
%
We prove the statement (F1)  by examination of each pair:
\begin{itemize}
\item
	If $(x,y)$ is a pair in $\mathcal{A}$,
	then $xy=\D4^i\D3^j\D5\D4^{m-1-i}\D3^{n-1-j}\D5$.
	The initial state $(0,0,0)$ of  $M$ goes to state $(i,j,0)$ by $\D4^i\D3^j$, 
	which goes to itself by \D5, and then to the accepting state $(m-1,n-1,0)$ 
	by $\D4^{m-1-i}\D3^{n-1-j}\D5$. 
	Thus 
	$xy$ is accepted by 
	$M$, 
	and so is in  $L$. This case is illustrated in Figure~\ref{pair_in_AB}, left.

\item If $(x,y)$ is a pair in $\mathcal{B}$, 
	then $xy=\D4^i\D3^{j-1}(k-2)\D6\D4^{m-1-i}\D3^{n-1-j}\D5$.
	State $(0,0,0)$ goes to state $(i,j-1,0)$ by $\D4^i\D3^{j-1}$, 
	which goes to state $(i,j,1)$ by $k-2$. 
	State $(i,j,1)$ goes to state $(i,j,0)$ by \D6, 
	and then to the accepting state $(m-1,n-1,0)$ 
	by $\D4^{m-1-i}\D3^{n-1-j}\D5$,
	which is shown in Figure~\ref{pair_in_AB}, right.%
\begin{figure}[hbt]
	\centerline{\includegraphics[scale=0.35]{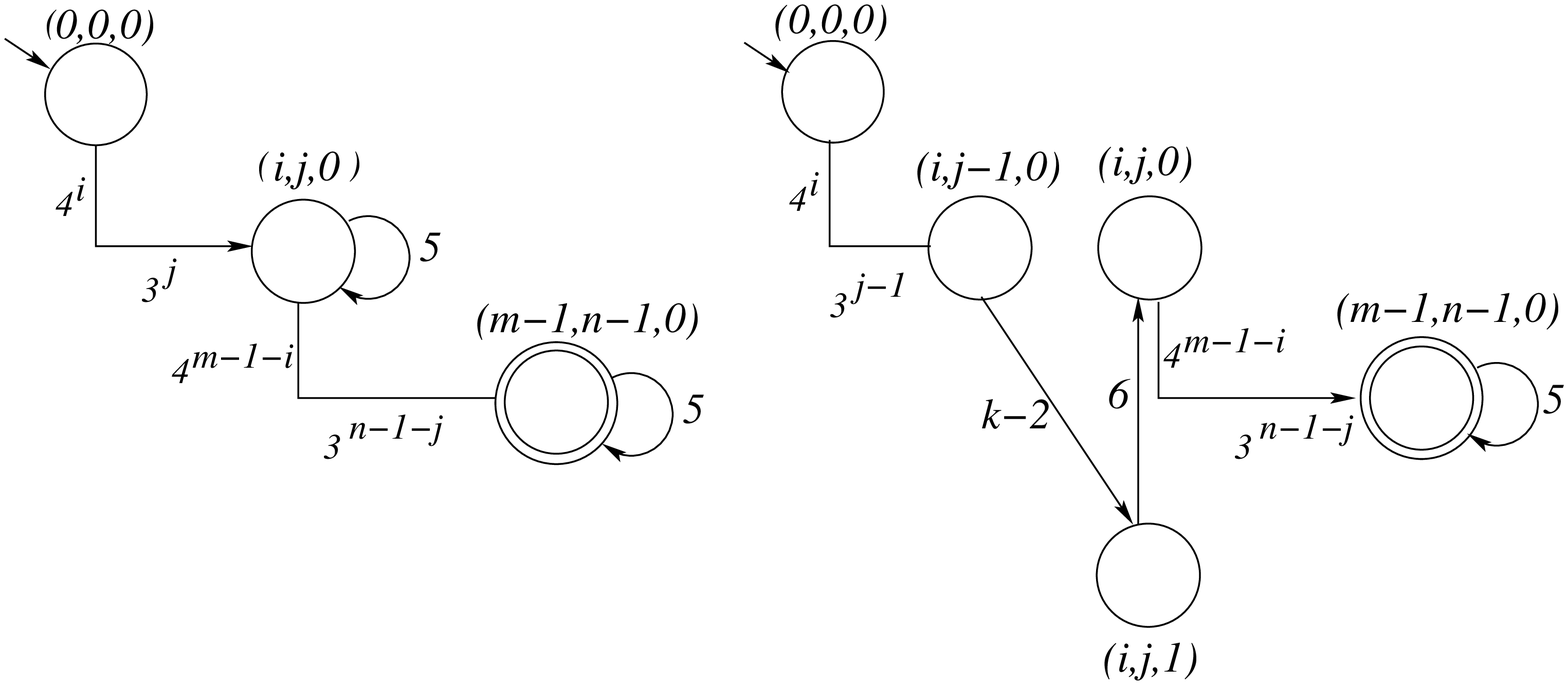}}
	\caption{A pair in $\mathcal{A}$ and a pair in $\mathcal{B}$.}
	\label{pair_in_AB}
\end{figure}
\item If $(x,y)$ is a pair in $\mathcal{C}$, 
	then $xy=\D4^i\D3^{n-2}(k-2)\D0\D3\D1^{m-1-i}\D2\D2$.
	State $(0,0,0)$ goes to state $(i,n-1,1)$ by $\D4^i\D3^{n-2}(k-2)$, 
	which goes to state $(A,i,1)$ by \D0, and then to state $(A,i,0)$ by $\D3$, 
	and 
	to the accepting state $(A,m-1,0)$ by $\D1^{m-1-i}\D2\D2$. 
	This computation path is presented in Figure~\ref{pair_in_CD}, left.

\item If $(x,y)$ is a pair in $\mathcal{D}$, 
	then $xy=\D4^{m-1}\D3^{n-1}(k-2)\D0^n\D4\D1^{n-1}\D3\D3$.
	State $(0,0,0)$ goes to 
	$(m-1,0,1)$ by $\D4^{m-1}\D3^{n-1}(k-2)$, 
	which goes to state $(B,1,1)$ by \D0, and then to state $(B,0,1)$ by $\D0^{n-1}$, 
	and to state $(B,0,0)$ by \D4,
	and 
	to the accepting state $(B,n-1,0)$ by $\D1^{n-1}\D3\D3$,
	as shown in Figure~\ref{pair_in_CD}, right.
\begin{figure}[hbt]
	\centerline{\includegraphics[scale=0.35]{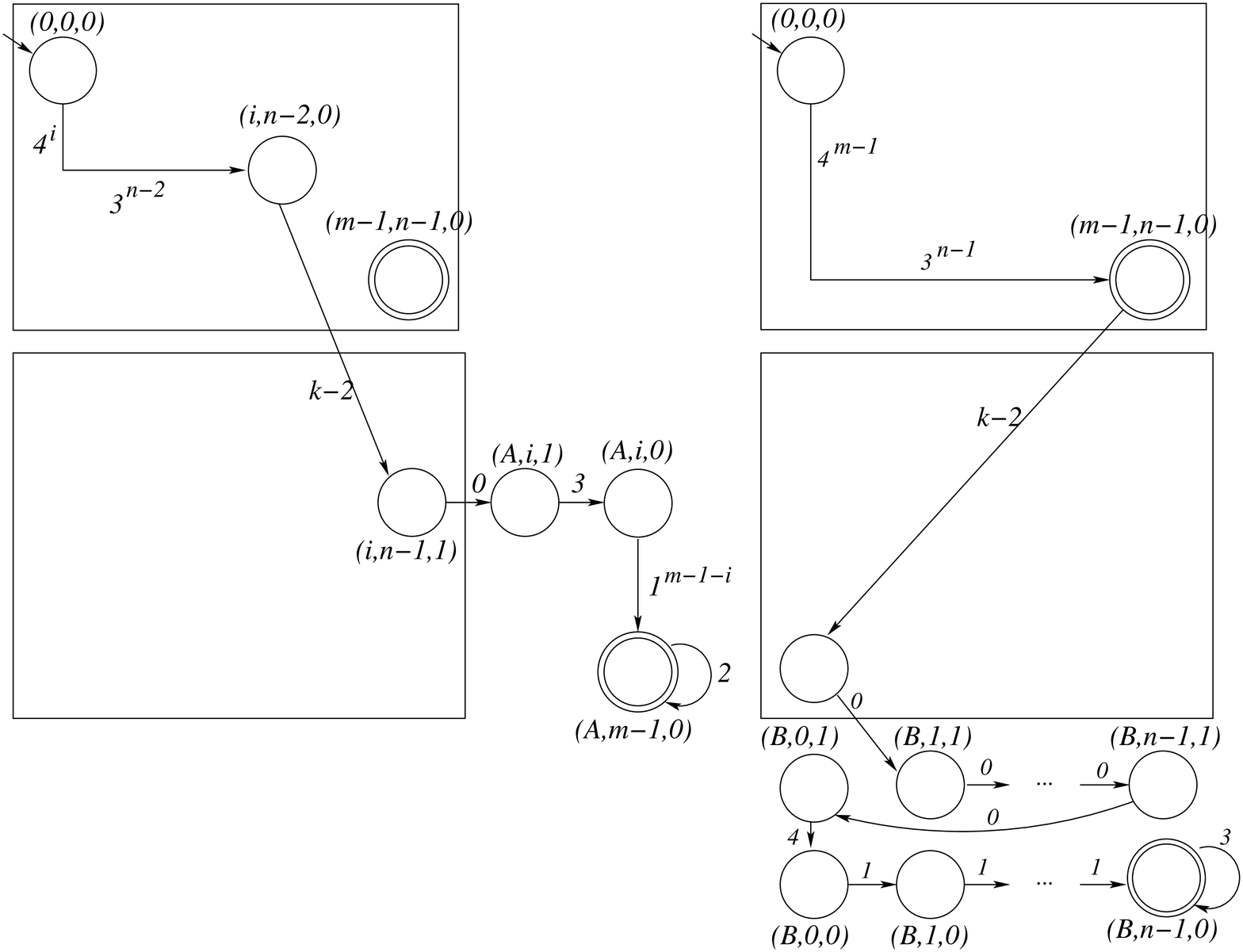}}
	\caption{A pair in $\mathcal{C}$ and a pair in $\mathcal{D}$.}
	\label{pair_in_CD}
\end{figure}
\end{itemize}
Thus in all four cases, the string $xy$ is accepted by the NFA $M$,
and so is in the language $L$.
This proves (F1).
To prove (F2) let us consider the following seven cases:

\begin{itemize}
\item	If $(x,y)$ and $(u,v)$ are two different pairs in $\mathcal{A}$, then
	$$(x,y)=(\D4^i\D3^j,\D5\D4^{m-1-i}\D3^{n-1-j}\D5) \mbox{ and }
	(u,v)=(\D4^{r}\D3^{s},\D5\D4^{m-1-r}\D3^{n-1-s}\D5),$$
	where $(i,j)\neq(r,s)$. 
	Consider the string $xv=\D4^i\D3^j\D5\D4^{m-1-r}\D3^{n-1-s}\D5$. 
	Since the digit \D5 cannot be read in any state $(B,p,0)$,
	after reading $xv$, the NFA $M$ may only be 
	in  state $$(m-1-r+i,n-1-s +j,0).$$ 
	This state is rejecting if $i\neq r$ or $j\neq s$. So the string $xv$ is not in $L$.
\item If $(x,y)$ is a pair in $\mathcal{A}$ and $(u,v)$ is a pair in $\mathcal{B}$,
	then $x=\D4^i\D3^j$  and $v=\D6w$ for a string $w$. 
	After reading $x$, the NFA $M$ is either in state $(i,j,0)$
	or in a state $(B,p,0)$. In these states, transitions by~\D6 are not defined.
	Thus the string $xv$ is rejected by $M$, and so is not in $L$.
\item If  $(x,y)$ is a pair in $\mathcal{A} \cup \mathcal{B}$,
	and $(u,v)$ is a pair in $\mathcal{C} \cup \mathcal{D}$,
	then $y=\D5w$ or $y=\D6w$ for a string $w$. Let us show that the string $uy$ is not in $L$.
	Notice that after reading the string $u$, 
	the NFA $M$ is either in a state 
	$(A,p,c)$ or in a state $(B,q,c)$.
	In these states, no transitions by \D5 and by \D6 are defined.
	Therefore, the string $uy$ is not in $L$.
\item	If $(x,y)$ and $(u,v)$ are two different pairs in $\mathcal{B}$, then 
	$(x,y)=(\D4^i\D3^{j-1}(k-2),\D6\D4^{m-1-i}\D3^{n-1-j}\D5)$ and 
	$(u,v)=(\D4^{r}\D3^{s-1}(k-2),\D6\D4^{m-1-r}\D3^{n-1-s}\D5)$,
	where $(i,j)\neq(r,s)$. After reading $x$,
	the nfa $M$ may only be in state $(i,j,1)$; 
	notice that transitions by $k-2$ are not defined in states $(B,q,0)$.
	State $(i,j,1)$ goes to state $(i,j,0)$ by \D6. 
	From this state, by reading $\D4^{m-1-r}\D3^{n-1-s}\D5$,
	the NFA may only reach the rejecting state $(m-1-r+i,n-1-s +j,0)$. 
	Hence the string $xv$ is not in $L$.
\item	If $(x,y)$ and $(u,v)$ are two different pairs in $\mathcal{C}$, then
	we have three subcases:
	\begin{itemize}
	\item
		$(x,y)=(\D4^i\D3^{n-2}(k-2)\D0, \D3\D1^{m-1-i}\D2\D2)$ and\\
		$(u,v)=(\D4^{r}\D3^{n-2}(k-2)\D0, \D3\D1^{m-1-r}\D2\D2)$, 
		where  $0\leqslant i<r\leqslant m-1$.\\
		After reading $x$, the NFA $M$ is in state $(A,i,1)$, which goes 
		to  state 
		$(A,i,0)$ by $\D3$, and then 
		to rejecting state $(A,m-1-r+i,0)$ 
		by $\D1^{m-1-r}\D2\D2$. Thus $xv$ is not in $L$.
	\item
		$(x,y)=(\D4^i\D3^{n-2}(k-2)\D0\D3, \D1^{m-1-i}\D2\D2)$ and\\
		$(u,v)=(\D4^{r}\D3^{n-2}(k-2)\D0\D3, \D1^{m-1-r}\D2\D2)$, 
		where $0\leqslant i<r\leqslant m-1$.\\
		After reading $x$, the NFA is in state $(A,i,0)$, which goes 
		to rejecting state $(A,m-1-r+i,0)$ 
		by $\D1^{m-1-r}\D2\D2$. Thus $xv$ is not in $L$.
	\item
		$(x,y)=(\D4^i\D3^{n-2}(k-2)\D0, \D3\D1^{m-1-i}\D2\D2)$ and\\
		$(u,v)=(\D4^{r}\D3^{n-2}(k-2)\D0\D3, \D1^{m-1-r}\D2\D2)$.\\
		After reading $u$, the NFA may only be in state $(A,r,0)$,
		where it cannot read symbol 3. Thus $uy$ is not in $L$.
	\end{itemize}
\item	If $(x,y)$ is a pair in $\mathcal{C}$,
	and $(u,v)$ is a pair in $\mathcal{D}$, then
	$y=w\D2\D2$ for a string~$w$. Consider the string~$uy$.
	After reading $u$, the NFA may only be in a state from $Q^B$
	(notice that $n\geqslant 2$). 
	By reading $w$,
	it either hangs,
	or remains in $Q^B$, and then  cannot read $\D2\D2$.
	Therefore, $uy$ is not in $L$.
\item If $(x,y)$ and $(u,v)$ are two different pairs in $\mathcal{D}$,
	then there are three subcases again:
	\begin{itemize}
	\item
		$(x,y)=(\D4^{m-1}\D3^{n-1}(k-2)\D0\D0^j, \D0^{n-1-j}\D4\D1^{n-1}\D3\D3)$ and\\
		$(u,v)=(\D4^{m-1}\D3^{n-1}(k-2)\D0\D0^{s}, \D0^{n-1-s}\D4\D1^{n-1}\D3\D3)$, 
		where \\$0\leqslant j<s\leqslant n-1$.
		Since $n\geqslant 2$, state $(m-1,0,1)$ only goes  to state $(B,1,1)$ by \D0.
		After reading $x$, the NFA is in state $(B,j+1,1)$, which goes 
		to rejecting state $(B,n-1-s+j,0)$ by $\D0^{n-1-s}\D4\D1^{n-1}\D3\D3$.
		Thus $xv$ is not in $L$.
	\item
		$(x,y)=(\D4^{m-1}\D3^{n-1}(k-2)\D0^n\D4\D1^j, \D1^{n-1-j}\D3\D3)$ and\\
		$(u,v)=(\D4^{m-1}\D3^{n-1}(k-2)\D0^n\D4\D1^{s}, \D1^{n-1-s}\D3\D3)$,  
		where $0\leqslant j<s\leqslant n-1$.
		After reading $x$, the NFA is in state $(B,j,0)$, which goes 
		to rejecting state $(B,n-1-s+j,0)$ 
		by $\D1^{n-1-s}\D3\D3$. Thus $xv$ is not in $L$.
	\item
		$(x,y)=(\D4^{m-1}\D3^{n-1}(k-2)\D0\D0^j, \D0^{n-1-j}\D4\D1^{n-1}\D3\D3)$ and\\
		$(u,v)=(\D4^{m-1}\D3^{n-1}(k-2)\D0^n\D4\D1^{s}, \D1^{n-1-s}\D3\D3)$.\\
		After reading $x$, the NFA $M$ is in state $(B,j+1,1)$,
		where it can read neither \D1 nor \D3. Thus~$xv$ is not in $L$.
	\end{itemize}
\end{itemize}
We have shown (F2), which means that the set $\mathcal{F}$ 
is a fooling set for the language~$L$. 
Consider one more pair $(\D4^{m-1}\D3^{n-2}(k-2)\D0\D1,\varepsilon)$. 
The NFA $M$ may only be in the accepting state $q_{acc}$ after reading 
the string $\D4^{m-1}\D3^{n-2}(k-2)\D0\D1$. 
Since in this state no transitions are defined,
and the second part of each pair in $\mathcal{F}$ is nonempty,
the set 
$$\mathcal{F} \cup \{(\D4^{m-1}\D3^{n-2}(k-2)\D0\D1,\varepsilon)\}$$
is a fooling set for the language $L$ of size $2mn+2m+2n+1$.
This means that every  NFA for the language $L$ 
requires at least $2mn+2m+2n+1$ states.
\end{proof}

The above lower bound
is not applicable.
in the case of a pair of one-state automata.
In fact, in this special case
the complexity of this operation is lower.
While Lemma~\ref{le:upper_bound}
gives an upper bound of 7 states for this case,
6 states are actually sufficient.

\begin{lemma}\label{le:m=n=1}
Let $A$ and $B$ be two $1$-state NFAs
over an alphabet $\Sigma_k$.
Then the language $L(A) \plus L(B)$
is representable by an NFA with $6$ states.
\end{lemma}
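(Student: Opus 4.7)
The plan is to start from the 7-state construction of Lemma~\ref{le:upper_bound} and reduce it by one state through a case analysis on the two input automata. First I would dispose of the degenerate subcases: if either $L(A)$ or $L(B)$ is empty, then $L(A) \plus L(B) = \emptyset$ is accepted by a 1-state NFA, and if one of the two languages equals $\{\varepsilon\}$, then $L(A) \plus L(B)$ coincides with the other language and is representable by a 1-state NFA. So the only interesting subcase is $L(A) = S^*$ and $L(B) = T^*$ with nonempty $S, T \subseteq \{\D1, \ldots, k-1\}$, where $S$ and $T$ are the self-loop digits at the unique (accepting) states; note that $\D0 \notin S \cup T$ because of the hypothesis $L(A) \cap \D0\Sigma_k^* = L(B) \cap \D0\Sigma_k^* = \emptyset$.

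Next I would specialise the construction of Lemma~\ref{le:upper_bound} to this case and examine the resulting 7-state NFA, whose states are $(p_0, q_0, 0)$, $(p_0, q_0, 1)$, $(A, p_0, 0)$, $(A, p_0, 1)$, $(B, q_0, 0)$, $(B, q_0, 1)$, and $q_{acc}$, and inspect the two ``tail-with-carry'' states. By construction, $(A, p_0, 1)$ is reached from $(p_0, q_0, 1)$ only by the transition on digit $a+1-k = \D0$ with $a = k-1$, which requires $k-1 \in S$, and its only possible self-loop is on $\D0$ under the same condition. Symmetrically, $(B, q_0, 1)$ is reachable only when $k-1 \in T$. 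Hence, whenever $k-1 \notin S$ the state $(A, p_0, 1)$ is unreachable and may be discarded, and analogously for $(B, q_0, 1)$ when $k-1 \notin T$; in either of these regimes the automaton is left with at most 6 states.

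The main obstacle is the remaining case $k-1 \in S \cap T$, where both $(A, p_0, 1)$ and $(B, q_0, 1)$ are reachable and no state can be deleted outright. My approach would be to merge these two states into a single new state $x$, producing again a 6-state NFA. Correctness of this merger rests on the observation that from each of $(A, p_0, 0)$, $(B, q_0, 0)$, and $q_{acc}$ no transition ever returns to $x$, so any computation visits $x$ in at most one contiguous segment. Such a segment can then be consistently reinterpreted in the original 7-state NFA as a visit to precisely one of $(A, p_0, 1)$ or $(B, q_0, 1)$ according to the outgoing transition used to leave $x$: an exit on a digit $a+1$ with $a \in S \setminus \{k-1\}$ forces the $A$-interpretation, an exit on $b+1$ with $b \in T \setminus \{k-1\}$ forces the $B$-interpretation, and an exit on $\D1$ to $q_{acc}$ admits either. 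Under whichever interpretation is chosen, the entry transition from $(p_0, q_0, 1)$ on $\D0$ and the self-loop on $\D0$ at $x$ are both valid in the original automaton precisely because $k-1 \in S \cap T$, so the merged 6-state NFA accepts exactly $L(A) \plus L(B)$.
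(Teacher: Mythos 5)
Your proposal is correct and follows essentially the same route as the paper's proof: observe that $(A,0,1)$ (resp.\ $(B,0,1)$) is unreachable unless the corresponding one-state automaton has a transition by $k-1$, and in the remaining case merge these two carry states into a single state with a self-loop on $\D0$ and exits to $(A,0,0)$, $(B,0,0)$ and $q_{acc}$. Your justification of why the merger preserves the language (a computation visits the merged state in one contiguous segment, reinterpretable according to the exit taken) is more detailed than the paper's, but the construction is the same.
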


\begin{proof}
Note that these 1-state NFAs must be partial DFAs.
Following the notation of Lemma~\ref{nfa_lower_bound_lemma},
let $0$ denote the state in the NFA $A$,
as well as the state in the NFA $B$.
If NFA $A$ has no transition on $k-1$,
then  state $(A, 0, 1)$ cannot be reached;
similarly for NFA $B$ and  state $(B, 0, 1)$.
If both $A$ and $B$ have transitions by $k-1$,
then states $(A, 0, 1)$ and $(B, 0, 1)$
can be merged into a state $q_{01}$,
which goes by $0$ to itself,
by a symbol $a+1$ to state $(A, 0, 0)$
if the NFA $A$ has a transition by $a$,
by a symbol $b+1$ to state $(B, 0, 0)$
if the NFA  $B$ has a transition by $b$,
for all $a,b$ in $\Sigma_k\setminus \{k-1\}$. 
\end{proof}


The next lemma establishes
a matching lower bound of 6 states.

\begin{lemma}
Let $\Sigma_k=\{\D0, \D1, \ldots, k-1\}$ be an alphabet with $k \geqslant 9$,
and consider $1$-state  partial DFAs $A$ and $B$ over $\Sigma_k$
which accept  languages $\{\D2,k-1\}^*$ and $\{\D3,k-1\}^*$, respectively.
Then every NFA
for $L(A) \plus L(B)$
has at least $6$ states.
\end{lemma}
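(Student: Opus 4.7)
The plan is to apply the fooling-set technique of Lemma~\ref{fool} and exhibit a fooling set of size six for $L=L(A)\plus L(B)$. The six pairs will be guided by the six states of the NFA~$M$ produced in the proof of Lemma~\ref{le:m=n=1}: the initial state $(0,0,0)$, the carry state $(0,0,1)$, the two states $(A,0,0)$ and $(B,0,0)$ in which only one of $A,B$ is still active, the merged state $q_{01}$, and the final state $q_{acc}$. For each of these six conceptual states, I would pick a pair $(x_i,y_i)$ so that the prefix $x_i$ drives any accepting computation of $M$ uniquely into that state and the suffix $y_i$ completes a valid sum from there.

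My candidate is
\[
\{(\varepsilon,\D5),\ (\D1,\D6),\ ((k-1)\D2,\D2),\ ((k-1)\D3,\D3),\ (\D1\D0,\D1),\ (\D1\D1,\varepsilon)\},
\]
one pair per conceptual state in the order listed. The intuition: $\varepsilon$ keeps $M$ at the initial state and $\D5$ is then read there as $\D2+\D3$ with no carry; $\D1$ forces the unique decomposition $\D2+(k-1)$, leaving a pending carry that $\D6=\D2+\D3+1$ then absorbs; the prefixes $(k-1)\D2$ and $(k-1)\D3$ commit $M$ to the $A$-only and $B$-only modes respectively, where the singleton suffix lies in the appropriate alphabet; $\D1\D0$ drops $M$ into $q_{01}$ and the suffix $\D1$ discharges the pending carry; and $\D1\D1$ reaches $q_{acc}$ directly. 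Condition~(F1) then reduces to exhibiting, for each~$i$, an explicit pair $(u,v)\in L(A)\times L(B)$ witnessing $x_iy_i\in L$.

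The bulk of the work, and the main obstacle, lies in checking condition~(F2) for the $\binom{6}{2}=15$ distinct index pairs. The checks rest on a handful of rigidity observations that depend on $k\geq 9$: the digit~$\D5$ can only be produced by $\D2+\D3$ with no carry in or out; $\D1$ only by $\D2+(k-1)$ with no carry in and carry out; and $\D6$ only by $\D2+\D3+1$. Combined with the fact that once $M$ enters the $A$-only, $B$-only, $q_{01}$, or $q_{acc}$ mode the set of readable digits is restricted to $\{\D2,k-1\}$, $\{\D3,k-1\}$, $\{\D0,\D1,\D3,\D4\}$, or~$\emptyset$ respectively, every crossed concatenation $x_iy_j$ either encounters an undefined transition or ends in a non-accepting state. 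I would organize the fifteen case checks by the conceptual state reached by each prefix, so that the forbidden digits in the swapped suffixes become uniformly visible, and the desired lower bound of six states then follows from Lemma~\ref{fool}.
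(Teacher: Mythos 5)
Your proposal takes essentially the same route as the paper: a fooling set of six pairs, one aimed at each state of the six-state NFA constructed in the preceding lemma, with (F1) and (F2) verified through the rigidity of digit decompositions for $k\geqslant 9$. Your candidate set differs from the paper's only in the choice of witness strings (the paper forces the carry with $k-2=(k-1)+(k-1)-k$ where you use $\D1=\D2+(k-1)-k$), and all fifteen cross-checks for (F2) do go through for your pairs, so the argument is sound.
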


\begin{proof}
Let $L=L(A) \plus L(B)$. 
Let the state in the NFA $A$ as well as the state in the NFA $B$
be denoted by 0.
Consider a six-state NFA for the language $L$
defined in Lemma~\ref{le:m=n=1},
with the states $(0,0,0)$, $(0,0,1)$, $q_{01}$,
$(A,0,0)$, $(B,0,0)$ and $q_{acc}$.
The transitions of this automaton
are shown in Figure~\ref{f:nfa_lower_bound_6}.
\begin{figure}[hbt]
	\centerline{\includegraphics[scale=0.85]{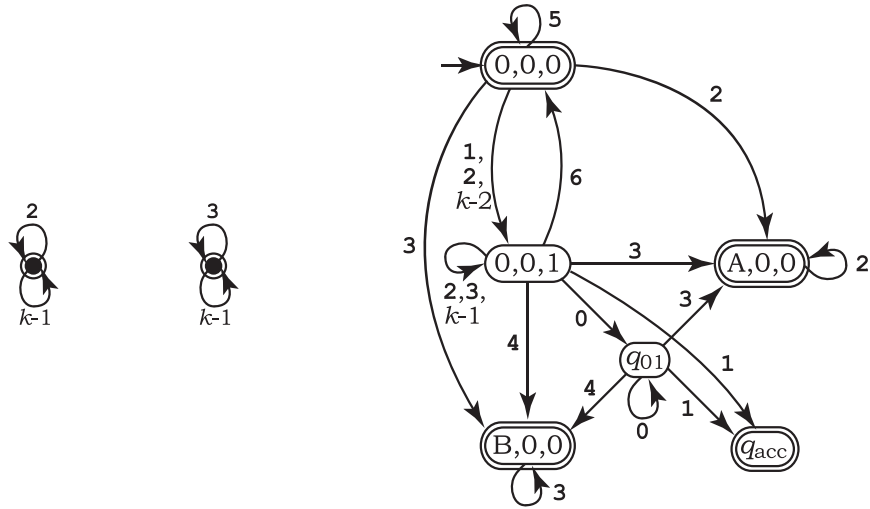}}
	\caption{The 1-state NFAs $A$ and $B$,
	and the 6-state NFA for $L(A) \plus L(B)$.}
	\label{f:nfa_lower_bound_6}
\end{figure}
Let
\begin{equation*}
\mathcal{A}=\{(\varepsilon, \D5),
	(k-2, \D6),
	((k-2)\D0, \D3\D2),
	((k-2)\D0\D3, \D2),
	((k-2)\D0\D4, \D3),
	((k-2)\D0\D1, \varepsilon) \},
\end{equation*}
and let us show that this set
is a fooling set for the language $L$.
Since the strings 
$\D5$, $(k-2)\D6$,
$(k-2)\D0\D3\D2$,
$(k-2)\D0\D4\D3$,
and $(k-2)\D0\D1$
are accepted by the NFA, the statement (F1) holds for $\mathcal{A}$.
On the other hand, the following strings are not accepted by this NFA:
the string $\D6$, any string starting with $(k-2)\D0$ and ending with $\D5$ or with $\D6$,
the strings $(k-2)\D0\D3\D3$, $(k-2)\D0\D4\D3\D2$, $(k-2)\D0\D4\D2$, 
and any string $(k-2)\D01w$ with $w\neq \varepsilon$.
This means that  the statement (F2) also holds for $\mathcal{A}$.
Hence $\mathcal{A}$ is a fooling set for the language $L$,
and so every NFA for this language needs at least 6 states.
\end{proof}

Putting together all the above lemmata,
the following result is obtained.

\begin{theorem}
For every $k \geqslant 9$,
the nondeterministic state complexity of positional addition
is given by the function
\begin{equation*}
	f_k(m,n)=\begin{cases}
		6,
			& \text{if\ } m=n=1, \\
		2mn+2m+2n+1,
			& \text{if\ } m+n \geqslant 3.
	\end{cases}
\end{equation*}
\end{theorem}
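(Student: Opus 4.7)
The plan is to assemble the theorem directly from the four preceding lemmas by matching upper and lower bounds in each regime of the piecewise formula. The upper bounds come essentially for free: for all $m, n \geqslant 1$ Lemma~\ref{le:upper_bound} yields $f_k(m,n) \leqslant 2mn+2m+2n+1$, and in the degenerate case Lemma~\ref{le:m=n=1} sharpens this to $f_k(1,1) \leqslant 6$.

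For the lower bounds, the case $m=n=1$ is exactly the content of the final lemma, giving $f_k(1,1) \geqslant 6$. For $m+n \geqslant 3$ with $n \geqslant 2$, Lemma~\ref{nfa_lower_bound_lemma} supplies the witness pair $(A_m, B_n)$ forcing $f_k(m,n) \geqslant 2mn+2m+2n+1$. The remaining subcase $n = 1$ and $m \geqslant 2$ is not literally covered by that statement, so my plan is to reduce it to the already-established case via commutativity of positional addition: since $\basek{u}+\basek{v} = \basek{v}+\basek{u}$ for all digit strings $u,v$, we have $K \plus L = L \plus K$ as languages, and therefore $f_k(m,n) = f_k(n,m)$. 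Applying Lemma~\ref{nfa_lower_bound_lemma} with parameters swapped then transfers the bound to this last subcase.

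The heavy lifting has been done already in the preceding lemmas, so no new argument is required. The only potential pitfall is forgetting that Lemma~\ref{nfa_lower_bound_lemma} is stated only for $n \geqslant 2$ and therefore does not directly address the endpoint $n = 1$, $m \geqslant 2$; this is dispatched in a single line by the commutativity observation above. Pairing the matching upper and lower bounds in each of the two regimes then yields the stated piecewise formula for $f_k$.
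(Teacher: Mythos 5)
Your proposal is correct and takes essentially the same route as the paper, which obtains the theorem by simply combining the four preceding lemmas. Your explicit handling of the subcase $n=1$, $m\geqslant 2$ via commutativity of $\plus$ (hence $f_k(m,n)=f_k(n,m)$) is a detail the paper leaves implicit, and it is a valid and in fact necessary observation, since Lemma~\ref{nfa_lower_bound_lemma} is stated only for $n\geqslant 2$.
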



An obvious question left open in this paper
is the state complexity of positional addition
with respect to deterministic finite automata.
A straightforward upper bound is given by $2^{2mn+2m+2n+1}$,
though calculations show that
for small values of $k,m,n$
this bound is not reached.
Though the exact values of this complexity function
might involve too difficult combinatorics,
determining its asymptotics is an interesting problem,
which is proposed for future study.

\bibliographystyle{eptcs}
\bibliography{jiraskova}

\end{document}